\newtheorem{theorem}{Theorem}[section]
\newtheorem{lemma}[theorem]{Lemma}
\newtheorem*{claim}{Claim}
\theoremstyle{definition}
\newcommand{\Class}[1]{\operatorname{\mathchoice{\text{\small
        $\mathrm{#1}$}} {\text{\small
        $\mathrm{#1}$}}{\mathrm{#1}}{\mathrm{#1}}}}
\newcommand{\Lang}[1]{{\leavevmode\hbox{\textsc{#1}}}}
\newenvironment{Enumerate}{\enumerate\itemsep=0pt\parskip=0pt}{\endenumerate}
\let\origthebibliography=\thebibliography
\renewcommand\thebibliography[1]{\small\origthebibliography{#1}\parskip0pt\itemsep0pt}
\begin{document}

\title{Perfect Phylogeny Haplotyping is Complete for Logspace}
\author{Michael Elberfeld\\ \small Institut f\"ur Theoretische
  Informatik\\[-2pt] \small Universit\"at zu L\"ubeck\\[-2pt] \small
  D-23538 L\"ubeck, Germany\\[-2pt] \small
  \path|elberfeld@tcs.uni-luebeck.de| }

\date{
\vspace*{0.5cm} \today}

\maketitle

\begin{abstract}
  Haplotyping is the bioinformatics problem of predicting likely haplotypes based on
  given genotypes. It can be approached using Gusfield's perfect
  phylogeny haplotyping (\Lang{pph}) method for which polynomial and
  linear time algorithms exist. These algorithm use sophisticated data
  structures or do a stepwise transformation of the genotype data into haplotype
  data and, therefore, need a linear amount of space. We are interested
  in the exact computational complexity of \Lang{pph} and show that it
  can be solved space-efficiently by an algorithm that needs only a
  logarithmic amount of space. Together with the recently proved
  $\Class{L}$-hardness of \Lang{pph}, we establish
  $\Class{L}$-completeness. Our algorithm relies on a new
  characterization for \Lang{pph} in terms of bipartite graphs, which
  can be used both to decide and construct perfect phylogenies for
  genotypes efficiently.
\end{abstract} 

\section{Introduction}
In human genetic variation studies, sequencing methods are applied
that read out the genetic information at \textsc{snp} (single
nucleotide polymorphism) sites for multiple individuals. In order to
be low-priced and feasible, these methods determine, for each site
separately, the present bases, of which there can be two since the
human \textsc{dna} is arranged in pairs of chromosomes. For each
individual in the variation study this yields a \emph{genotype} that
describes the bases at \textsc{snp} sites.
While the genotype says for every site which bases are present, it
lacks the information on how the bases are assigned to the chromosomes
of a pair. This information, which is described by \emph{haplotypes},
is crucial to describe fine-grained genetic variation.

The objective of \emph{haplotyping} is to compensate the drawback of
genotype data by predicting biologically reasonable
haplotypes computationally. Gusfield~\cite{Gusfield2002} proposed an approach to
haplotyping that seeks haplotypes that are arrangeable in a
perfect phylogenetic tree~\cite{Gusfield2002}. He showed that this
problem, which will be called \emph{perfect phylogeny haplotyping} (\Lang{pph}) is
solvable in polynomial time by a reduction to the graph realization
problem .
Due to the practical importance of haplotyping, several groups also
proposed simpler polynomial time~\cite{Bafnaetal2003,EskinHK2003} and
linear time algorithms~\cite{DingFG2006,LiuZ2005,SatyaM2006} for
Gusfield's approach.

In the present paper we study the space complexity of \Lang{pph}.
In \cite{ElberfeldT2008b} we showed that 
\Lang{pph} is hard for the complexity class $\Class{L}$ (deterministic
logarithmic space) and lies in the counting class 
$\Class{\oplus L}$~\cite{ElberfeldT2008b} (see this paper for a wider
discussion of the
haplotyping issue and complexity theoretic terms). The main open
problem of~\cite{ElberfeldT2008b}, namely, whether \Lang{pph} lies in the class $\Class{L}$, is
answered affirmatively by the present paper. To prove this result, we
present a graph-based characterization that extends ideas from Eskin,
Halperin and Karp~\cite{EskinHK2003}. Given a set of genotypes, they
build, for each genotype separately, graphs where the vertices
represent sites and edges represent known relations between them.
Based on these graphs, they proved that the existence of a perfect
phylogeny is related to the question
whether one can extend the known relations, such that all graphs
become complete bipartite. Our characterization avoids the step of
guessing new relations between pairs of sites: We determine all
relevant relations beforehand and directly construct graphs that are bipartite
if, and only if, there is a perfect phylogeny. Since the graph
construction can be described by first-order formulas and the problem of 
deciding whether a graph is bipartite lies in 
$\Class{L}$~\cite{Reingold2008}, we are able to prove the following  
theorem: 
\begin{theorem}\label{theorem:completeness}
  \Lang{pph} is complete for deterministic logarithmic space.
\end{theorem}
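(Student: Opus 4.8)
The plan is to prove the two directions of completeness separately. Hardness for $\Class{L}$ is already available from~\cite{ElberfeldT2008b}, so the whole weight of the argument rests on showing the containment $\Lang{pph} \in \Class{L}$. For this I would proceed in three stages: (i) set up a purely combinatorial characterization of the yes-instances of $\Lang{pph}$ in terms of bipartiteness of explicitly defined graphs; (ii) argue that these graphs are definable from the input genotype matrix by first-order formulas; and (iii) compose this first-order reduction with Reingold's logspace procedure for undirected reachability and bipartiteness~\cite{Reingold2008} to obtain a logspace decision algorithm. Since first-order reductions are computable in (uniform) $\Class{L}$ and $\Class{L}$ is closed under composition, stage (iii) is immediate once (i) and (ii) are in place.

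For stage (i) I would refine the approach of Eskin, Halperin and Karp. Recall that each genotype is a string over $\{0,1,2\}$, where a $2$ marks a heterozygous site, and that resolving a genotype amounts to splitting every $2$-entry into a $0$ on one haplotype and a $1$ on the other; the collection of all resolved haplotypes must then admit a perfect phylogeny, which for binary characters is the absence of the forbidden four-gamete pattern $\{00,01,10,11\}$ on any pair of columns. Eskin--Halperin--Karp build, per genotype, a graph on its heterozygous sites and ask whether the partially known \emph{same-side}/\emph{opposite-side} relations between sites can be completed so that each graph becomes complete bipartite. My characterization removes the guessing: first precompute, for every pair of sites, \emph{all} relations that are forced globally (a pair of columns already realizing three of the four gametes forces, in every genotype where both sites are heterozygous, a particular same-side or opposite-side relation; constant and ``leaf'' columns are treated separately). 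One then assembles a bounded family of graphs --- one per genotype recording the forced within-genotype relations, plus a linking graph recording the inter-genotype constraints --- in which a proper $2$-colouring corresponds exactly to a choice of haplotypes avoiding the forbidden pattern. The key claim to establish is the equivalence: all these graphs are bipartite if and only if the genotype matrix admits a perfect phylogeny, and moreover a $2$-colouring can be read off as an explicit phylogeny, so the characterization is constructive as the abstract promises.

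Stage (ii) is then bookkeeping: each vertex of a constructed graph is a site or a site--genotype pair, and each edge is determined by checking a constant-size pattern (the presence of at most three prescribed gametes among at most three columns, restricted to one or two rows), so vertices and edges are picked out by fixed first-order formulas over the input relation, independent of instance size. I expect the principal obstacle to be stage (i), specifically verifying that looking only at forced, ``local'' relations loses no information --- that whenever every pairwise-forced constraint is simultaneously satisfiable one can actually assemble a \emph{global} perfect phylogeny --- and handling the boundary cases (constant columns, columns heterozygous in only a single genotype, empty connected components, and the freedom to flip the root) cleanly while keeping every predicate first-order. Once the equivalence and its first-order description are settled, Theorem~\ref{theorem:completeness} follows by combining $\Lang{pph} \in \Class{L}$ with the known $\Class{L}$-hardness.
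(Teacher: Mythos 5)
Your overall architecture coincides with the paper's: take $\Class{L}$-hardness from~\cite{ElberfeldT2008b}, characterize the yes-instances of \Lang{pph} by bipartiteness of graphs that are first-order definable from the genotype matrix, and compose with Reingold's logspace algorithm. However, essentially all of the technical content of the theorem lives inside your stage~(i), and there you only \emph{name} the key equivalence (``all these graphs are bipartite if and only if the matrix admits a perfect phylogeny'') rather than prove it --- you even flag it yourself as the principal obstacle. That equivalence is Lemma~\ref{lemma:characterization} of the paper, and its proof is not routine bookkeeping. In particular, your graphs must encode not only the pairwise relations forced directly by the induced sets (three of the four gametes already present at a pair of columns), but also the relations that arise by \emph{propagation} through a genotype with three or more heterozygous sites and then across different genotypes. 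A per-genotype graph of directly forced edges plus an unspecified ``linking graph'' does not yet pin down what the inter-genotype edges are or why they are sound. The paper resolves this by first passing to the directed problem, partitioning the genotypes with $2$-entries into sets $A_i$ indexed by a distinguished column chosen via a partial order $\succ^A$ on columns, and building one edge-weighted resolution graph per column; the soundness of the cross-genotype $0$-weight edges then rests on a separate case-analytic claim that two genotypes lying in \emph{different} sets $A_i$, $A_j$ but sharing two heterozygous columns $k,l$ must both be resolved equally at $k$ and $l$. Nothing in your proposal substitutes for that argument.

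The converse direction is likewise missing: from bipartiteness (absence of odd-weight cycles) one must actually \emph{assemble} a global explaining haplotype matrix and verify the three-gamete property for \emph{every} column pair --- including pairs where neither relation is forced and pairs whose heterozygous genotypes are spread over several of your graphs --- via an explicit construction ($2$-colour each component relative to the distinguished column and orient the $2$-entries accordingly) followed by a case analysis. As written, your proposal is a correct plan with the right reduction target and the right closing step, but the characterization it rests on is asserted rather than established, and the specific graphs you describe are not yet well-defined enough to carry the proof.
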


The paper is organized as follows: In Section~\ref{section:basics} we
provide a formal definition of the \Lang{pph} problem and induced sets
of pairs of sites. In Section~\ref{section:characterization}, we
prove our graph based-characterization and, after that, we
reduce~\Lang{pph} to the
problem of whether an undirected graph is
bipartite in Section~\ref{section:reduction}.

\section{Perfect Phylogenies and Induced Sets}\label{section:basics}

Since only two different bases are present at the majority of
\textsc{snp} sites, it is convenient to code haplotypes as strings
over the alphabet $\{0,1\}$, where for a given site $0$ stands for one
of the bases that can be observed in practice, while $1$ encodes a
second base that can also be observed. A genotype $g$ is a sequence of
sets that arises from a pair of haplotypes $h$ and $h'$ as follows:
The~$i$th set in the sequence $g$ is $\{h[i],h'[i]\}$. However, it is
customary to encode the set $\{0\}$ as $0$, to encode $\{1\}$ as $1$,
and $\{0,1\}$ as $2$, so that a genotype is actually a string over the
alphabet $\{0,1,2\}$. For example, the two haplotypes $0100$ and
$0111$ underly (we also say \emph{explain}) the genotype $0122$; and
so do $0101$ and $0110$. These haplotype pairs differ
in the way how the 2-entries at positions three and four are
determined. The haplotypes have the same entries at positions three and
four in the first case and different entries in the second case.
This fact can be generally stated as follows: If $h$ and $h'$
are explaining haplotypes for a genotype $g$ with 2-entries in
sites~$i$ and~$j$ ($g[i] = g[j] = 2$), then either $h[i] = h[j] \neq h'[i]
= h'[j]$ or $h[i] = h'[j] \neq h'[i] = h[j]$ holds.  In the first case
we say that \emph{$h$ and $h'$ resolve $g$ equally in~$i$ and~$j$} and
in the second case we say that \emph{$h$ and $h'$ resolve $g$
  unequally in~$i$ and~$j$}. To represent more than one haplotype, we
arrange them in \emph{haplotype matrices} where each row is a
haplotype and each column corresponds to a site. For genotypes, we use
\emph{genotype matrices}. An $2n \times m$ haplotype matrix $B$
\emph{explains} an $n \times m$ genotype matrix $A$ if for each~$i$,
the haplotypes in rows $2i-1$ and $2i$ of $B$ explain the genotype in
row~$i$ of $A$. 

We are interested in haplotypes that are arrangeable in a perfect
phylogenetic tree. We say that a haplotype matrix $B$ \emph{admits a
  perfect phylogeny} if there exists a rooted tree $T$, such that:
\begin{Enumerate}
\item Each row of $B$ labels exactly one node of $T$.
\item Each column of $B$ labels exactly one edge of $T$ and each edge
  is labeled by at least one column.
\item For every two rows $h$ and $h'$ of $B$ and every column~$i$, we
  have $h[i] \neq h'[i]$ if, and only if, $i$ lies on the path from $h$ to $h'$ in
  $T$.
\end{Enumerate}
A haplotype matrix $B$ \emph{admits a directed perfect phylogeny} if
$B$ together with the all-0-haplotype admits a perfect phylogeny.  The
\emph{four gamete property} is an alternative characterization for
perfect phylogenies, observed by many authors
(see~\cite{GrammNT2007a} for references). It depends on a certain
relation between pairs of columns: The \emph{induced set} $\operatorname{ind}^B(i,j)$
of two columns~$i$ and~$j$ in a haplotype matrix $B$ contains all
strings from $\{00,01,10,11\}$ that appear in the columns~$i$
and~$j$. The four gamete property then says that a haplotype matrix $B$ admits a perfect
phylogeny if, and only if, for each pair of columns~$i$ and~$j$ 
we have $\{00,01,10,11\} \neq \operatorname{ind}^B(i,j)$. Carried over to the
directed case we know that $B$ 
admits a directed perfect phylogeny if, and only if, for each pair of
columns~$i$ and~$j$ we have $\{01,10,11\} \nsubseteq
\operatorname{ind}^B(i,j)$. We refer to this as the \emph{three gamete
  property} in the following.

We say that a genotype matrix $A$ \emph{admits a (directed) perfect
  phylogeny} if there exists an explaining haplotype matrix for it that
admits a (directed) perfect phylogeny or, equivalently, satisfies the
four (three) gamete property. The perfect phylogeny haplotyping problem
($\Lang{pph}$) contains exactly the genotype matrices that admit a
perfect phylogeny. Similar, the directed perfect phylogeny haplotyping
problem
($\Lang{dpph}$) contains exactly the genotype matrices that admit a
directed perfect phylogeny. The problems \Lang{pph} and \Lang{dpph}
are closely related through first-order reductions: For a
reduction from \Lang{dpph} to \Lang{pph}
is suffices to append the all-0-genotype to a given genotype
matrix and for the converse direction we can use a reduction from 
Eskin, Halperin and Karp~\cite{EskinHK2003}: In every column where a
1-entry appears before a 0-entry, substitute all 1-entries by
0-entries and all 0-entries by 1-entries. For convenience we restrict
ourselves 
to directed perfect phylogenies in the rest if this section and
Section~\ref{section:characterization}. We come back to undirected
perfect phylogenies in Section~\ref{section:reduction}.

A genotype matrix determines, to a certain extend, the induced sets of
explaining haplotype matrices. This is
formalized by the notion of induced sets for genotype
matrices in~\cite{EskinHK2003}: For a genotype matrix $A$ and two columns~$i$ and~$j$, the
set $\operatorname{ind}^A(i,j)$ contains a string $xy \in
\{00,01,10,11\}$, whenever $A$ has a genotype $g$ with
either $g[i] = x$ and $g[j] = y$, $g[i] = x$ and $g[j] = 2$ or $g[i] =
2$ and $g[j] = y$. From this definition follows that we have
$\operatorname{ind}^A(i,j) \subseteq \operatorname{ind}^B(i,j)$ for
any haplotype matrix $B$ explaining $A$ and $\operatorname{ind}^A(i,j)
= \operatorname{ind}^B(i,j)$ if $A$ does not contain a genotype with
2-entries in both $i$ and $j$. Also we know that 
$A$ does not admit a directed perfect phylogeny whenever
$\{01,10,11\} \subseteq \operatorname{ind}^A(i,j)$ holds.

If we consider only explaining haplotype matrices that satisfy the three
gamete property, we can infer some information about the resolution of
2-entries: Let $A$ be a genotype matrix and $B$ an explaining haplotype
matrix for $A$ that satisfies the three gamete property. Whenever we
have $\{01,10\} \subseteq \operatorname{ind}^A(i,j)$ for
columns~$i$ and~$j$, we know that every genotype $g$ of $A$ with $g[i]
= g[j] =2$ is resolved unequally in~$i$ and~$j$ by its haplotypes from
$B$. Whenever we have $\{11\} \subseteq \operatorname{ind}^A(i,j)$, the
genotypes are resolved equally in~$i$ and $j$. We can also infer
resolutions through genotypes with at least three 2-entries:
Consider a genotype $g$ from $A$ and columns $i$, $j$ and $k$ with 
$g[i] = g[j] = g[k] = 2$ and induced sets $\{11\} \subseteq
\operatorname{ind}^A(i,j)$ and $\{01,10\} \subseteq
\operatorname{ind}^A(j,k)$. Let $h$ and $h'$ be the explaining
haplotypes for $g$ from $B$. We know from the induces that $h[i]
= h[j] \neq h'[i] = h'[j]$ and $h[j] = h'[k] \neq h'[j] = h[k]$ hold.
This implies $h[i] = h'[k] \neq h'[i] = h[k]$, and, therefore, every
genotype with 2-entries
in~$i$ and~$k$ must be resolved equally by its haplotypes in that
columns. Note, that we do not know the resolution in columns $i$ and
$k$ from the induced set of these column pair. We deduced the
resolution through three 2-entries in $g$ by using information about the
induced sets $\operatorname{ind}^A(i,j)$ and $\operatorname{ind}^A(j,k)$.
The derived equal resolution in columns~$i$ and $k$ may, again,
trigger a resolution in another pair of columns $k$ and~$l$
through a genotype with 2-entries in~$i$, $k$ and~$l$. In this way
resolutions may propagate through column pairs of the whole genotype
matrix. This can possibly end up with a column pair where one genotype
is forced to be resolved equally, while another is already resolved
unequally. In this case $A$ does not admit a perfect phylogeny. In the
next section we describe graphs that represent resolutions in column
pairs and their propagation through the genotype matrix. 

\section{A Graph-Based Characterization for Perfect Phylogeny Haplotyping} \label{section:characterization}

We present a new characterization for \Lang{dpph} in terms of
undirected edge-weighted graphs. The graphs are used to represent
resolutions in column pairs and their propagation in the
genotype matrix. Therefore we call these graphs \emph{resolution graphs}. The
vertices of each resolution graph are identified with columns of a
given genotype matrix and the edges are weighted by $0$ or $1$. An
edge with weight $0$ between two vertices $k$ and~$l$ indicates that
all genotypes must be resolved equally in columns $k$ and~$l$. Similar,
an edge with weight $1$ indicates that the columns are resolved
unequally. Our characterization in Lemma~\ref{lemma:characterization}
says that the
absence of odd-weight cycles (the weight of a path or a cycle is the
sum of its edge weights) from the resolution graphs is equivalent to
the fact that there is a directed perfect phylogeny. 

Given an $n \times m$ genotype matrix $A$, we build $m$ resolution
graphs, for every column~$i$ a single graph~$G_i$. A graph~$G_i$
describes resolutions for a particular set $A_i$ of genotypes from
$A$ and we define the $A_i$ sets such that they are all pairwise
disjoint. If one wants to determine the haplotypes for a
particular genotype $g$, it will suffice to consider the unique graph~$G_i$
with $g \in A_i$. To assign the genotypes to the sets $A_i$, we use 
a partial order on genotype matrix columns
from~\cite{EskinHK2003}: A column
with index~$i$ \emph{is greater than} a column with index~$j$ (denoted
by $i \succ^A j$) if $\operatorname{ind}^A(i,j) \subseteq
\{00,10,11\}$ and the column vectors are not the
same (see Figure~\ref{figure:characterizationexample} for an example). Beside
this partial order, we also
use the total order that is given by the indices of the columns of $A$. For
every $i \in
\{1,\ldots,m\}$, the set $A_i$ is then defined as follows:
\begin{align*}
A_i = \{ & \text{genotype}\ g\ \text{of}\ A \mid\ g[i] = 2,\\ 
&~~\forall j \neq i\ (g[j] = 2 \Longrightarrow\ \text{not}\ j \succ^A
i)\ \text{and}\\ 
&~~\forall j \neq i\ ((g[j] = 2\ \text{and}\ \forall k \neq j\ (g[k] = 2
\Longrightarrow\ \text{not}\ k \succ^A j)) \Longrightarrow\ j > i)\}\ .
\end{align*}
This definition assigns every genotype with a 2-entry to exactly one
set $A_i$. Genotypes without 2-entries 
do not need any resolution and, therefore, they are not assigned to any set. 
Figure~\ref{figure:characterizationexample} shows an 
example of a genotype matrix $A$ and its sets $A_i$.

For every $i \in \{1,\dots,m\}$, we now define the resolution graph
$G_i$ (again, see Figure~\ref{figure:characterizationexample} for an
example). As already stated, the vertices $V_i$ of $G_i$ are identified with
columns from $A$. A vertex with index $k$ lies in $V_i$ if, and only if, $A_i$
contains a genotype with a 2-entry in column $k$. The edges 
$E_i \subseteq \{e \subseteq V_i \mid |e| = 2\}$
and their weights $w_i: E_i \to \{0,1\}$ are constructed as follows:
\begin{align*}
  \{k,l\} \in E_i\ \text{and}\ w_i(\{k,l\}) = 0\ & \text{if, and only if, there exists}\ g_1 \in
  A_i\ \text{with}\ g_1[k] = g_1[l] = 2\ \text{and}\\ 
  & \text{(a)}\ 11 \in \operatorname{ind}(k,l),\ \text{or}\\ 
  & \text{(b)}\ \text{there is a column}\ j \neq i\ \text{and}\ g_2 \in
  A_j\ \text{with}\ g_2[k] = g_2[l] = 2\\[1.5ex] 
  \{k,l\} \in E_i\ \text{and}\ w_i(\{k,l\}) = 1\ & \text{if, and only if, there exists}\ g_1 \in A_i\ \text{with}\ g_1[k] = g_1[l] = 2\\ 
  & \text{and}\ \{01,10\} \subseteq \operatorname{ind}^A(k,l)
  \end{align*}

\begin{figure} [htpb]
\begin{center}
  \includegraphics{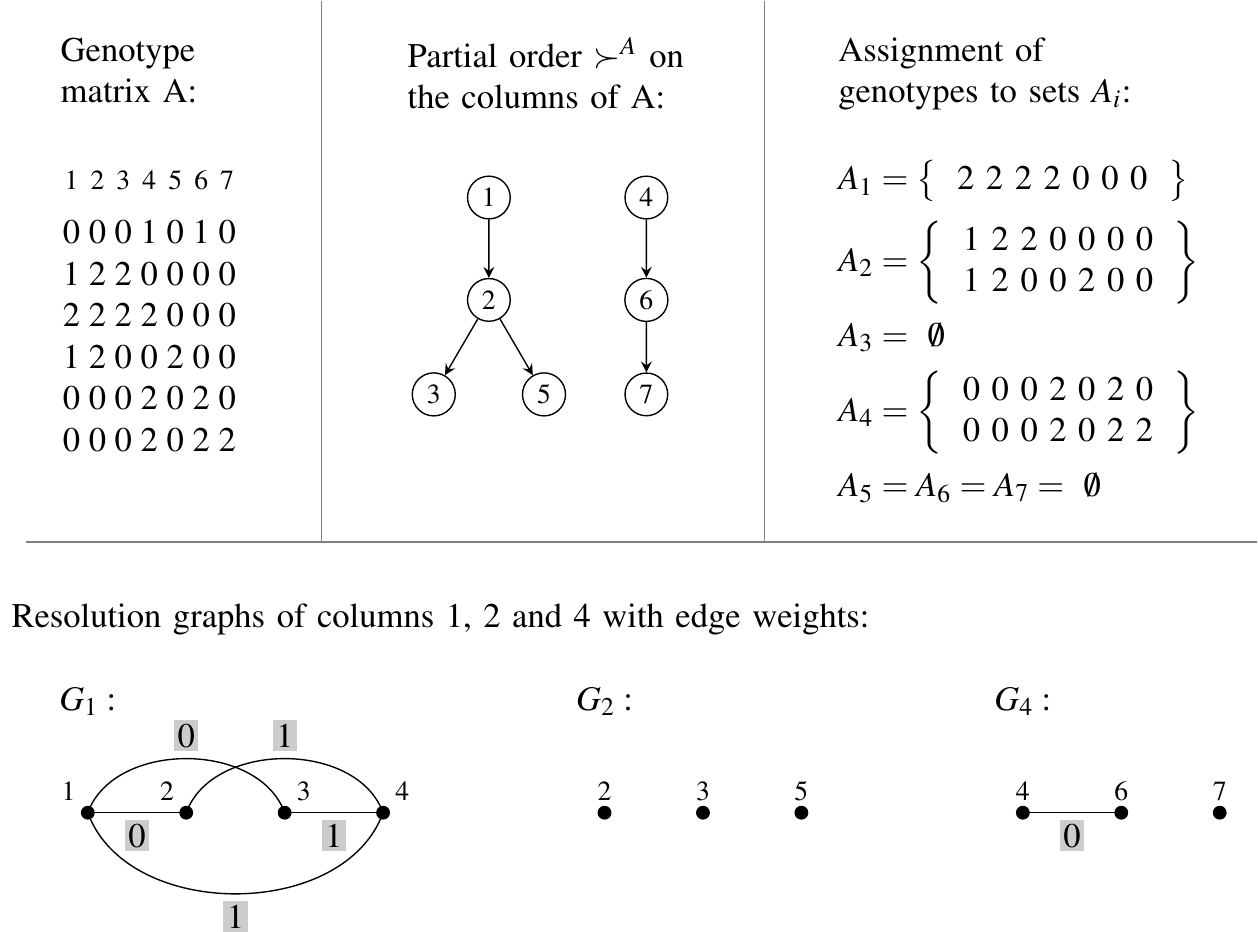}
\end{center}
     \caption[]{
       For a genotype matrix $A$ with seven columns, this figure shows
       the corresponding partial order~$\succ^A$ on the columns of
       $A$, the assignment of genotypes to sets $A_i$, and the
       construction of resolution graphs $G_i$. Some sets $A_i$ are
       empty and, therefore, the corresponding resolution graphs
       are also empty. A directed edge in the partial order from a vertex with
       index $k$ to a vertex with index $l$ means $k \succ^A l$.}
     \label{figure:characterizationexample}
\end{figure}

The characterization for \Lang{dpph} is as follows:
\begin{lemma}\label{lemma:characterization}
  An $n \times m$ genotype matrix $A$ admits a directed perfect
  phylogeny if, and only if, for each pair $i,j \in \{1,\dots,m\}$, we
  have $\{01,10,11\} \nsubseteq \operatorname{ind}^A(i,j)$, and for each $i \in
  \{1,\dots,m\}$,~$G_i$ does not contain an odd-weight cycle.
\end{lemma}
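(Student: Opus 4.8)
The plan is to prove both directions of the equivalence by connecting the combinatorial structure of the resolution graphs $G_i$ to the three gamete property of an explaining haplotype matrix. Recall that $A$ admits a directed perfect phylogeny iff it has an explaining haplotype matrix $B$ satisfying the three gamete property, i.e. with $\{01,10,11\}\nsubseteq\operatorname{ind}^B(k,l)$ for all column pairs. Since $\operatorname{ind}^A(k,l)\subseteq\operatorname{ind}^B(k,l)$ always, the condition $\{01,10,11\}\nsubseteq\operatorname{ind}^A(i,j)$ is clearly necessary, so the real content is the two-way link between existence of a valid $B$ and acyclicity (in the weight sense) of the graphs $G_i$.

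\emph{From a perfect phylogeny to no odd-weight cycles.} Assume $B$ explains $A$ and satisfies the three gamete property. First I would argue that the weights put on edges of $G_i$ record \emph{forced} resolutions: if $\{k,l\}\in E_i$ with $w_i(\{k,l\})=1$ because $\{01,10\}\subseteq\operatorname{ind}^A(k,l)$, then by the discussion preceding the lemma every genotype with 2-entries in $k$ and $l$ must be resolved unequally in $B$; if $w_i(\{k,l\})=0$ via case (a), i.e. $11\in\operatorname{ind}^A(k,l)$, the resolution is forced to be equal; and in case (b), where some $g_2\in A_j$ ($j\neq i$) also has 2-entries in $k$ and $l$, I need the key structural fact that whatever resolution $B$ picks for the pair $(k,l)$ in genotypes of $A_j$ is the \emph{same} resolution it must use for genotypes of $A_i$ — this is what makes weight $0$ the consistent choice and is where the definition of the partial order $\succ^A$ and the carefully chosen sets $A_i$ do their work. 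Once every edge weight equals the (genotype-independent) parity of the resolution of its endpoints in $B$, summing around a cycle gives $0\bmod 2$, so no odd-weight cycle can exist.

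\emph{From no odd-weight cycles to a perfect phylogeny.} Conversely, assuming $\{01,10,11\}\nsubseteq\operatorname{ind}^A(i,j)$ for all $i,j$ and that each $G_i$ has no odd-weight cycle, I would explicitly construct an explaining $B$ and verify the three gamete property. Since each $G_i$ has no odd-weight cycle, on each connected component I can 2-color the vertices by weight-parity from a fixed base vertex; this tells me, for every genotype $g\in A_i$ and every pair $k,l$ with $g[k]=g[l]=2$, whether to resolve $g$ equally or unequally in $k,l$, and these choices are mutually consistent within the component. For the remaining 2-entries (columns not joined by edges, or genotypes with a single 2-entry) I resolve arbitrarily, say equally with the lower-indexed determined column, or by 0. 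This defines $B$; it explains $A$ by construction. I then check that $\{01,10,11\}\nsubseteq\operatorname{ind}^B(k,l)$ for every pair: the only way a forbidden pattern $\{01,10,11\}$ can appear in $B$ but not in $A$ is through the resolution of some 2-entry, and I show that the graph construction (cases (a), (b), and the weight-1 rule) together with acyclicity has already ruled out exactly those conflicting configurations — both a conflict \emph{within} one $A_i$ (two genotypes wanting opposite resolutions, which would be an odd-weight cycle or a contradictory edge) and a conflict \emph{across} different $A_i$, $A_j$ (handled by case (b) forcing agreement).

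The main obstacle I anticipate is the cross-set consistency argument: showing that the partition $\{A_i\}$ via the partial order $\succ^A$ guarantees that any pair of columns $(k,l)$ appearing with a double 2-entry in genotypes from two different sets $A_i$ and $A_j$ is necessarily resolved the same way in both — this is precisely the subtlety that Eskin–Halperin–Karp handle by guessing relations and that the present construction wants to sidestep. I expect this to require a careful case analysis on the $\succ^A$-relations among $k$, $l$, $i$, and $j$ (and the definition of $A_i$, which picks, among the 2-columns of $g$, the $\succ^A$-maximal ones and then the index-minimal among those), together with an application of the three gamete property to the genotype witnessing membership in $A_i$ versus the one witnessing membership in $A_j$. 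Getting the bookkeeping of this argument exactly right, so that it covers genotypes with three or more 2-entries where resolutions propagate transitively, will be the crux; the 2-coloring and the final three-gamete verification are comparatively routine once that lemma is in place.
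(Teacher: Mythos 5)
Your outline follows essentially the same route as the paper: necessity of the induced-set condition, an argument that the edge weights of the $G_i$ record forced resolution parities (so an odd-weight cycle is impossible), and, conversely, a $2$-coloring of each $G_i$ by weight parity that determines the explaining haplotypes, followed by a three-gamete check. The if-direction and the weight-$1$ and case-(a) parts of the only-if direction are fine as sketched.

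The genuine gap sits exactly where you flag the crux, and the statement you propose to prove there is not the right one. For a weight-$0$ edge $\{k,l\}$ created by rule (b) --- some $g_1\in A_i$ and some $g_2\in A_j$ with $j\neq i$, both having 2-entries in $k$ and $l$ --- you say you need that $B$ resolves $(k,l)$ ``the same'' in $A_i$ as in $A_j$. That is too weak: your cycle-sum argument needs every edge weight to equal the actual resolution parity, so for a weight-$0$ edge you must show the resolution is forced to be \emph{equal} ($h[k]=h[l]$), not merely consistent across the two sets; if both sets resolved $(k,l)$ unequally, the invariant ``weight $=$ parity'' would fail and an odd-weight cycle would no longer be contradictory. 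The paper proves the stronger fact: since $i\neq j$ and each of $i,j$ is the $\succ^A$-maximal, index-minimal 2-column of its own genotype, at least one of $g_1[j]$, $g_2[i]$ is a $0$- or $1$-entry; in each of the four resulting cases this non-2 entry, together with the definition of $\succ^A$, forces either $11$ or both $01$ and $10$ into $\operatorname{ind}^A(i,k)$ and $\operatorname{ind}^A(i,l)$ (or the analogous sets with $j$), and composing the two forced resolutions through the shared 2-column $i$ (resp.\ $j$) yields $h_1[k]=h_1[l]$. Without this case analysis the only-if direction does not close. Two smaller points: ``summing around a cycle gives $0\bmod 2$'' does not follow from genotype-independent edge parities alone, because consecutive edges of the cycle may be witnessed by different genotypes; one has to telescope through the common 2-column $i$ (every $g\in A_i$ has $g[i]=2$) and track resolutions of the pairs $(i,i_j)$, as the paper does. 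And in your construction, ``resolve arbitrarily'' must be a per-component choice of colors rather than a per-genotype choice, since two genotypes of the same $A_i$ sharing a non-edge pair $(k,l)$ would otherwise be free to resolve it oppositely and place all four gametes into $\operatorname{ind}^B(k,l)$; the paper avoids this by joining every component to the hub vertex $i$ and reading all resolutions off path parities in one connected graph.
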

\begin{proof}
  \emph{Only-if-part:} Let $A$ be a genotype matrix and $B$ a
  haplotype matrix for it that satisfies the three gamete property. Thus, for every 
  pair of columns $i$ and $j$, we have $\{01,10,11\} \nsubseteq
  \operatorname{ind}^B(i,j)$ and, therefore, $\{01,10,11\} \nsubseteq
  \operatorname{ind}^A(i,j)$. To prove that none of the resolution graphs has an odd-weight cycle,
  we first show the following property: 
  \begin{claim}
    Let $A$ be a genotype matrix and $B$ a haplotype matrix for it that
    satisfies the three gamete property. Let $i$, $k$ and $l$ be columns
    and $g$ a genotype with $g\in A_i$ and $g[k] = g[l] = 2$. If $G_i$
    contains an edge with weight $1$ between $k$ and $l$, then $B$ resolves
    $g$ unequally in $k$ and $l$. If the weight is $0$, then the
    resolution is equal.
  \end{claim}
  \begin{proof}
    If there is a $1$-weighted edge between $k$ and $l$, we have $\{01,10\}
    \in \operatorname{ind}^A(k,l)$ and, therefore, $g$ must be resolved
    unequally in $k$ and $l$. If there is a $0$-weighted edge between $k$ and
    $l$, it is constructed for one of two reason: Whenever $11 \in
    \operatorname{ind}^A(k,l)$, we know by the three gamete property that
    $g$ must be resolved equally in $k$ and $l$. We are left with the case
    that there is another column~$j$ and a genotype $g_2 \in A_j$ with $g_2[k]
    = g_2[l] = 2$. In this case, the following matrix shows what we know about
    the entries of~$g_1$ and $g_2$, where $g_1[j]$ and $g_2[i]$ are values from $\{0,1,2\}$:
  \begin{center}
  \includegraphics{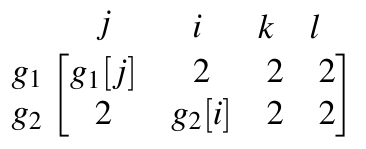}
  \end{center}
  By definition a genotype is contained in a set $A_i$ if column~$i$ is the
  maximal column (with respect to $\succ^A$) with the lowest index among all
  columns with a 2-entry in the genotype. Since $i$ and $j$ are not the same,
  this implies that at least one of the entries $g_1[j]$ and $g_2[i]$ does not
  equal 2.  We distinguish between the possible values for them (possible
  values are $g_2[i] = 1$, $g_2[i]=0$, $g_1[j] = 1$ and $g_1[j] =0$) and show
  that we have $h_1[k] = h_1[l] \neq h_1'[k] = h_1'[l]$ for the explaining
  haplotypes $h_1$ and $h_1'$ of $g_1$:

  Case $g_2[i] = 1$: We know $11 \in \operatorname{ind}^A(i,k)$, $11
      \in \operatorname{ind}^A(i,l)$ and, therefore, $h_1[i] = h_1[k] \neq
      h_1'[i] = h_1'[k]$ and $h_1[i] = h_1[l] \neq h_1'[i] = h_1'[l]$ which
      implies $h_1[k] = h_1[l] \neq h_1'[k] = h_1'[l]$.

  Case $g_2[i] = 0$: By assumption column~$k$ is not greater than
    column~$i$ which implies, together with the fact that they are not the
    same, $10 \in \operatorname{ind}^A(i,k)$.  Similarly, we have $10 \in
    \operatorname{ind}^A(i,l)$ for columns~$i$ and~$l$. Furthermore, the
    $0$-entry in $g_2[i]$ and the 2-entries in~$g_2[k]$ and~$g_2[l]$ ensure
    that $01 \in \operatorname{ind}^A(i,k)$ and $01 \in
    \operatorname{ind}^A(i,l)$. The resulting induced sets force an unequally
    resolution of~$g_1$ in both~$i$ and~$k$, and~$i$ and~$l$. Thus $h_1[i] =
    h_1'[k] \neq h_1'[i] = h_1[k]$, $h_1[i] = h_1'[l] \neq h_1'[i] = h_1[l]$
    and, therefore, $h_1[k] = h_1[l] \neq h_1'[k] = h_1'[l]$.

  Case $g_1[j] = 1$ and $g_1[j] = 0$ are similar to case $g_2[i] = 1$
  and case $g_2[i] = 0$, respectively. Thus, we proved the claim.
  \end{proof}
  
  We assume, for sake of contradiction, that there is a column~$i$
  such that~$G_i$ contains an odd-weight cycle $i_1, e_{1},i_2 \dots i_p, e_{p},
  i_{p+1} = i_1$, where the $i_j$ are column
  indices and the $e_{j}$ are weighted edges from $E_i$. For every
  edge $e_{j}$ let $g_{j} \in A_i$ be a genotype with $g_{j}[i_j] =
  g_{j}[i_{j+1}] = 2$ and let $h_{j}$ and $h_{j}'$ be the explaining
  haplotypes for $g_j$. The above claim and the discussion
  about induced sets in Section~\ref{section:basics} imply the following two
  properties:  
  First, if $h_j$ and $h_j'$ resolve $g_j$ in columns~$i$ and~$i_j$ equally,
  then they
  resolve $g_j$ in columns~$i$ and $i_{j+1}$ equally if $e_{j}$ has
  weight 0 and unequally, if the weight is 1.  Second, if $h_{j}$ and
  $h_{j}'$ resolve $g_j$ in columns~$i$ and~$i_j$ unequally, they
  resolve $g_j$ in columns~$i$ and $i_{j+1}$ equally if $e_{j}$ has
  weight 1 and unequally, if the weight is 0.  Thus, when an edge
  $e_j$ has weight 1, the resolution alternates between the column
  pair~$i$ and $i_j$ and the column pair~$i$ and $i_{j+1}$. The
  resolution does not alter if the weight is 0.  Now assume that
  $g_{1}$ is resolved equally by $h_{1}$ and $h_{1}'$ in columns~$i$
  and $i_1$. From the above property follows that for every
  $j \in \{1,\dots,p\}$, the genotype $g_{j}$ is resolved equally by its haplotypes
  in columns~$i$ and $i_{j+1}$ if the path $i_1, e_{1},i_2 \dots i_{j},
  e_{j}, i_{j+1}$ has an even weight, and unequally otherwise. If we
  consider the whole odd-weight cycle, this fact yields a contradiction to
  the resolution of $g_{1}$ in~$i$ and $i_1$. The assumption
  that~$g_1$ is resolved unequally in columns~$i$ and~$i_1$ is also contradictory.

  \emph{If-part:} Let $A$ be a genotype matrix such that for every $i$ and $j$
  we have $\{01,10,11\} \nsubseteq \operatorname{ind}^A(i,j)$ and no
  resolution graph has an odd-weight cycle. We construct an
  explaining haplotype matrix $B$ for $A$ and show that it
  satisfies the three gamete property.

  \emph{Construction:} For genotypes without 2-entries, the explaining
  haplotypes are simply copies of them. The other genotypes (with
  2-entries) are partitioned by the sets $A_1, \dots, A_m$ and we
  treat every set $A_i$ and its graph~$G_i$ separately. First we
  transform~$G_i$ into a connected graph $G_i'$: From every component
  that does not contain~$i$, we select a vertex and connect it to~$i$
  by an even-weight edge. Since~$G_i$ does not contain odd-weight cycles,
  each vertex is connected to~$i$ by either only even-weight
  paths or only odd-weight paths in $G_i'$. For each genotype $g \in A_i$, we
  construct two explaining haplotypes $h$ and $h'$ as follows: For every
  column~$j$ with $g[j] = 2$, we set 
  $h[j] = 0$ and $h'[j] = 1$, if there is an even-weight path
  between~$i$ and~$j$ in $G_i'$ and $h[j] = 1$ and $h'[j] = 0$, otherwise. 
  The 0-entries and 1-entries are copied to both haplotypes.
  
  \emph{Three gamete property:} We are left to prove that  
  $\{01,10,11\} \nsubseteq \operatorname{ind}^B(k,l)$ holds for every column
  pair $k$ and $l$. If there is no genotype $g$ with 2-entries in $k$ and $l$, we
  simply have $\operatorname{ind}^B(k,l) = \operatorname{ind}^A(k,l)$
  and the three gamete property holds for that columns by assumption.
  Consider two columns $k$ and $l$ that have 2-entries in a common
  genotype. We distinguish whether~$11 \in \operatorname{ind}^A(k,l)$, 
  $\{01,10\} \subseteq \operatorname{ind}^A(k,l)$ or no of
  them holds and show that $\{01,10,11\} \nsubseteq \operatorname{ind}^B(k,l)$
  is true in all cases.
  \begin{Enumerate}
  \item Let $11 \in \operatorname{ind}^A(k,l)$, $g$ a genotype
  with $g[k] = g[l] = 2$ and $g \in A_i$ for a column $i$. We know that
  there is an edge $(0,\{k,l\})$ in $G_i'$ and, therefore, either all
  paths from~$i$ to~$k$ and all paths from~$i$ to~$l$ have an even
  weight or all paths from~$i$ to~$k$ and all paths from~$i$ to~$l$
  have an odd weight. The construction of the haplotypes implies
  that $g$ is resolved equally in $k$ and $l$. Thus, the induced set in columns
  $k$ and $l$ is extended only by the strings $00$ and $11$, which does not
  conflict with the three gamete property.
  \item Let $\{01,10\} \subseteq \operatorname{ind}^A(k,l)$,
  $g$ a genotype with $g[k] = g[l] = 2$ and~$i$ with $g \in A_i$. This
  implies that there is an edge $(1,\{k,l\})$ in $G'_i$ and,
  therefore, either all paths from~$i$ to~$k$ have an even weight and
  all paths from~$i$ to~$l$ have an odd weight or, conversely, all
  paths from~$i$ to~$k$ have an odd weight and all paths from~$i$
  to~$l$ have an even weight. Thus, $g$ is resolved unequally by it
  haplotypes in~$k$ and~$l$, which yields the additonal induced strings 
  $01$ and $10$.
  \item Assume that neither $11 \in \operatorname{ind}^A(k,l)$
  nor $\{01,10\} \subseteq \operatorname{ind}^A(k,l)$ holds. If the genotypes
  with 2-entries in columns~$k$ and~$l$ lie in the same set $A_i$, then
  they are all resolved equally or all resolved unequally in~$k$ 
  and~$l$, since their resolution depends on same graph~$G_i'$. If the
  genotypes with 2-entries in~$k$ and~$l$ are distributed among
  multiple sets $A_{i}$, the corresponding resolution graphs
  contain even-weight edges between~$k$ and~$l$ by definition. Similar to the
  first case, the construction assures that these genotypes are resolved
  equally in~$k$ and~$l$.
  \end{Enumerate}
  In all cases we proved that the resolutions of column pairs (and the resulting
  extension of induced sets) do not conflict with the three gamete property.
\end{proof}

\section{Reduction from PPH to the Bipartition Problem} \label{section:reduction}

We use our characterization from the last section to show that
\Lang{pph} can be reduced to the question of whether an undirected
graph is bipartite. Bipartite graphs are characterized by the absence of
an odd-length path and the formal $\Lang{bipartition}$
problem contains exactly the graphs with this property.

\begin{lemma}\label{lemma:reduction}
  $\Lang{pph}$ reduces to $\Lang{bipartition}$ via
  first-order-reductions.
\end{lemma}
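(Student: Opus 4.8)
The plan is to turn the characterization of Lemma~\ref{lemma:characterization} into an explicit first-order reduction. First I would handle the reduction from $\Lang{pph}$ to $\Lang{dpph}$: given a genotype matrix $A$, either append the all-$0$-genotype (trivial, first-order describable) or, following Eskin--Halperin--Karp, flip every column in which a $1$ appears before a $0$; that the flipped matrix lies in $\Lang{dpph}$ iff $A \in \Lang{pph}$ is already justified in Section~\ref{section:basics}, and the flip is expressible by a first-order formula over the input structure since ``a $1$ occurs before a $0$ in column $i$'' is $\exists r\, \exists s\, (r<s \wedge A[r,i]{=}1 \wedge A[s,i]{=}0)$. So it suffices to reduce $\Lang{dpph}$ to $\Lang{bipartition}$.

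For the core reduction I would build, from $A$, a single undirected graph $H$ whose bipartiteness encodes both conditions of Lemma~\ref{lemma:characterization}. The vertex set of $H$ consists of the pairs $(i,k)$ with $1 \le i,k \le m$ (one copy of each ``column $k$'' for each resolution graph $G_i$), plus, for every column $i$, a distinguished ``root'' vertex that will play the role of $i$ inside $G_i'$. Within a fixed $i$, I add an edge between $(i,k)$ and $(i,l)$ for every edge $\{k,l\}\in E_i$, subdivided by one extra vertex exactly when $w_i(\{k,l\})=1$ (subdividing an edge turns ``weight $1$'' into ``odd length'' and ``weight $0$'' into ``even length''), so that odd-weight cycles in $G_i$ become odd-length cycles in $H$, i.e. non-bipartiteness. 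To also capture the first condition ($\{01,10,11\}\nsubseteq\operatorname{ind}^A(i,j)$ for all $i,j$), I would, whenever that inclusion does hold for some pair $i,j$, add a self-loop (or a fixed small odd gadget, e.g. a triangle) somewhere in $H$ so that $H$ is automatically non-bipartite in exactly those cases; this is safe because $\operatorname{ind}^A(i,j)$ is definable by a first-order formula quantifying over one genotype row and reading two entries. The membership relations needed for the construction --- ``$g\in A_i$'', ``$\{k,l\}\in E_i$ with a given weight'', ``$i\succ^A j$'', ``$11\in\operatorname{ind}^A(k,l)$'', ``$\{01,10\}\subseteq\operatorname{ind}^A(k,l)$'' --- are each expressible by a first-order formula over $A$ with a bounded number of quantifiers, since the set definitions in Section~\ref{section:characterization} are literally written as first-order sentences with quantifiers ranging over rows and columns; this is the key point that makes the whole construction a first-order reduction. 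I would then verify that $H$ is bipartite iff $A\in\Lang{dpph}$: the ``self-loop'' part makes $H$ non-bipartite exactly when the three-gamete-style condition fails, and, restricted to the other case, a connected component of $H$ lying over index $i$ is bipartite iff $G_i$ has no odd-weight cycle, by the standard ``parity of path weights'' argument (the same one used in the if-part of Lemma~\ref{lemma:characterization} when forming $G_i'$).

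The main obstacle I expect is bookkeeping rather than conceptual: making sure the graph $H$ is genuinely first-order definable. Two subtleties need care. First, the subdivision vertices that encode weight-$1$ edges must be given canonical names in the output structure's universe --- e.g. I would take the universe of $H$ to be tuples $(i,k,l,t)$ with a tag $t$ selecting ``the $k$-vertex of $G_i$'', ``the root of $G_i$'', or ``the subdivision vertex of edge $\{k,l\}$ in $G_i$'' --- and the adjacency relation of $H$ must then be a Boolean combination of the already-first-order-definable relations above; I would check this is a finite first-order formula with quantifier rank independent of $m$ and $n$. Second, connecting the off-root components of each $G_i$ to the root by an even-weight edge (the $G_i \rightsquigarrow G_i'$ step) is needed to make ``odd-weight cycle'' and ``odd-weight path from root'' coincide; but selecting one vertex per component is not first-order. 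I would sidestep this exactly as the paper does conceptually --- note that adding even-weight (length-$2$ via a shared subdivision-free) edges from the root to \emph{every} vertex of $V_i$ does not create any new odd cycle and does not change bipartiteness, and ``every vertex of $V_i$'' \emph{is} first-order --- so the reduction can afford to be wasteful and connect the root to all of $V_i$. With those two points settled, $H$ is first-order definable from $A$ and $H\in\Lang{bipartition}$ iff $A\in\Lang{pph}$, completing the proof; combined with the $\Class{L}$-hardness of $\Lang{pph}$ from~\cite{ElberfeldT2008b} and Reingold's $\Class{L}$ algorithm for $\Lang{bipartition}$~\cite{Reingold2008}, this yields Theorem~\ref{theorem:completeness}.
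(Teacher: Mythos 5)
Your overall plan---flip columns to pass from \Lang{pph} to \Lang{dpph}, build the resolution graphs, re-encode edge weights as path-length parities, and argue first-order definability of all the auxiliary relations---is the paper's plan, and your observations about first-order definability of $\succ^A$, $g\in A_i$, $E_i$ and $w_i$ are correct and even more explicit than the paper's. But two steps of your construction are wrong as written. First, the parity encoding is inverted: you subdivide an edge (turning it into a path of length $2$, which is \emph{even}) exactly when its weight is $1$, and leave weight-$0$ edges as single edges of \emph{odd} length $1$. Your own parenthetical states the correct target (weight $\equiv$ length $\bmod 2$), but the operational rule achieves the opposite: a cycle with $a$ weight-$1$ edges and $b$ weight-$0$ edges becomes a cycle of length $2a+b$, whose parity is that of $b$, not of the cycle weight $a$. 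Concretely, a triangle of three weight-$0$ edges is an even-weight (harmless) cycle in $G_i$ but becomes an odd $3$-cycle in your $H$. The paper subdivides the weight-$0$ edges and leaves the weight-$1$ edges alone, which is the correct encoding.

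Second, the root gadget is both unnecessary and broken. It is unnecessary because Lemma~\ref{lemma:characterization} characterizes \Lang{dpph} by the absence of odd-weight cycles in the graphs $G_i$ themselves; bipartiteness of a disjoint union does not require connectivity, so the $G_i\rightsquigarrow G_i'$ step (which the paper uses only to \emph{construct} haplotypes in the if-part of Lemma~\ref{lemma:characterization}) plays no role in the decision reduction. It is broken because your claim that joining the root to \emph{every} vertex of $V_i$ by even-weight edges ``does not create any new odd cycle'' is false: if $\{k,l\}$ is a weight-$1$ edge of $G_i$, then root--$k$--$l$--root is a new cycle of odd weight, so your $H$ would be non-bipartite for essentially every instance containing an unequal resolution, independent of whether $A$ admits a perfect phylogeny. (The paper avoids this by adding only one even-weight edge per connected component, a forest of new edges that creates no new cycles at all---and, again, only outside the reduction.) One point in your favor: your triangle gadget for pairs with $\{01,10,11\}\subseteq\operatorname{ind}^{A}(i,j)$ explicitly handles the first condition of Lemma~\ref{lemma:characterization}, which the paper's correctness argument passes over in silence; that addition is sound and, with the two fixes above, your reduction would go through.
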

\begin{proof}
  \emph{Construction:} The reduction procedure consists of three steps:
  For a given genotype matrix $A$, we first 
  apply the reduction from \Lang{pph} to \Lang{dpph} from
  \cite{EskinHK2003}, which is described in
  Section~\ref{section:basics}. This yields a genotype matrix $A'$.
  Then we construct a graph $G$ that is the disjoint union of
  all resolution graphs of column of $A'$. In the last step, we
  substitute each $0$ weighted edge in $G$ by a path of length 2
  and, finally, delete all edge weights. This yields the graph $G'$. 

  \emph{Correctness:} First, $A$ admits a perfect phylogeny exactly if $A'$
  admits a directed perfect phylogeny. Furthermore, we know from 
  Lemma~\ref{lemma:characterization} that $A'$ admits a directed
  perfect phylogeny if, and only if, $G$ does not contain an
  odd-weight cycle. The insertion of paths of length 2 transforms
  even-weight paths into even-length paths and odd-weight paths into
  odd-length paths. Therefore, $G$ does not contain an odd-weight cycle
  if, and only if, $G'$ is bipartite, which proves the lemma.
\end{proof}

Since $\Lang{bipartition} \in \Class{L}$~\cite{Reingold2008} and
$\Lang{pph}$ is $\Class{L}$-hard~\cite{ElberfeldT2008b}, the
reduction gives the last step to prove
Theorem~\ref{theorem:completeness} from the introduction.

\section{Conclusion} \label{section:conclusion}

In this paper we settled the main open problem
from~\cite{ElberfeldT2008b} and showed that perfect phylogeny
haplotyping is solvable in deterministic logarithmic space and,
therefore, is also $\Class{L}$-complete (a hardness proof can be found
in~\cite{ElberfeldT2008b}). We introduced a characterization of 
\Lang{pph} in terms of resolution graphs that represent resolutions of
2-entries in genotypes. We proved that the question of whether there
are resolutions that conflict with the existence of perfect
phylogenies is closely related to the bipartition problem for
undirected graphs. This yields a reduction from
\Lang{pph} to the bipartition problem, which can be seen as a 
conceptual easy and efficient approach to decide and
construct perfect phylogenies for genotype data.


\bibliographystyle{plain}

\end{document}